\documentclass[11pt]{article}

\usepackage[T1]{fontenc}
\usepackage[utf8]{inputenc}
\usepackage{amsmath,amssymb,amsthm,mathtools}
\usepackage[margin=1in]{geometry}
\usepackage{enumitem}
\usepackage{xcolor}
\usepackage[hidelinks]{hyperref}

\newtheorem{theorem}{Theorem}
\newtheorem{lemma}{Lemma}

\theoremstyle{remark}
\newtheorem*{remark}{Remark}

\newcommand{\AD}{\mathrm{AD}}
\newcommand{\Reach}{\operatorname{Reach}}
\newcommand{\1}{\mathbf{1}}

\title{The Local-to-Global $\AD$-$k$ Conjecture is Resolved}
\author{Wei Chen \footnote{The proof was found with the help of GPT-5.6 Sol.}\\
		Microsoft Research Asia \\
		\tt{weic@microsoft.com}}
\date{}

\begin{document}

\maketitle


\begin{abstract}
$\AD$-$k$ stands for {\em Alternating Differences through order \(k\)}, and it is a property of set functions denoting that
	the first order difference of the set function is nonnegative (a.k.a. monotnocity), the second order difference is nonpositive (a.k.a. submodularity), and so on
	with signs alternating through order $k$.
Chen et al. \cite{adkConjecture} conjectured that in an influence diffusion model called the general threshold model 
	originally defined by Kempe et al.~\cite{kempe2003infmax}, 
	if every local influence function is $\AD$-$k$, then the global influence spread function is also $\AD$-$k$, for any (possibly cyclic)
	directed graph and any $k$.
This paper provides a complete proof showing that the conjecture is true.
The proof utilizes M\"obius inversion, reverse reachable sets, and decision tree partition techniques and extends the probability distribution of node triggering sets into
	a generalized algebraic structure allowing negative weights for triggering sets.
The extension to negative-weighted triggering sets may be of independent interest and may have further algorithmic implications.

%
%
%
\end{abstract}

\section{Introduction}
\label{sec:introduction}

Influence maximization is the task of finding a small set of seed nodes in a
social network so that the influence spread produced by these seeds under a
specified diffusion model is maximized~\cite{kempe2003infmax}.  
The influence-maximization problem grew out of earlier work on exploiting
network effects in marketing and information propagation
\cite{domingos2001mining,richardson2002mining}, and it has broad applications, 
	including viral marketing, outbreak detection, and
	limiting the spread of misinformation
	\cite{richardson2002mining,Leskovec2007costeffective,Budak2011misinformation}.
Influence diffusion modeling and
	influence maximization tasks have been extensively studied (cf. \cite{chen2013information,Li2018survey}).
Among these studies, a substantial algorithmic literature has developed scalable methods for
influence maximization, ranging from efficient greedy and heuristic
implementations \cite{ChenWY09efficientinfluence,chen2010sharpphard} to
	reverse-reachability sampling algorithms with near-linear running times
	\cite{borgs2014rrset,tang2014newrrset,tang2015rrset}, the latter of which
	have intrinsic connection with the prove method of this paper.

Underlying this optimization problem is a stochastic diffusion model that
characterizes how influence propagates through a network from the seed nodes.
Kempe et al. formulated a number of classical influence diffusion models,
	including the independent cascade model, the linear threshold model,
	the triggering model, and the general threshold model
	as the generalization of all the above models~\cite{kempe2003infmax}.
Consider a social network $G=(V,E)$ represented as a directed graph, where
$V$ is the set of nodes and $E$ is the set of directed edges.  In the
{\em general threshold model}, every node $v$ has a local influence function
$f_v:2^{N_v}\rightarrow[0,1]$, where $N_v$ is the set of in-neighbors of $v$.
The diffusion proceeds in discrete steps $\tau=0,1,2,\ldots$.  At step
$\tau=0$, all nodes in a given seed set $X$ are activated, while all other
nodes are inactive.  Independently, every node $v\in V$ samples a threshold
$\theta_v\in[0,1]$ uniformly at random.  At every step $\tau\geq1$, let
$S_{v,\tau-1}$ be the set of in-neighbors of an inactive node $v$ that are
active by step $\tau-1$.  The node $v$ is activated at step $\tau$ if
$f_v(S_{v,\tau-1})\geq\theta_v$; otherwise, it remains inactive.  The process
continues until no additional node is activated.  The global {\em influence
spread} function $\sigma(X)$ is the expected number of nodes active at the end
of the diffusion process starting from seed set $X$.

In their seminal paper, Kempe et al.~\cite{kempe2003infmax} conjectured that if
the local influence functions $f_v$ are monotone and submodular for all
$v\in V$, then the global influence spread function $\sigma$ is also monotone
and submodular.  
This conjecture captures a fundamental local-to-global
property of the general threshold model.  Mossel and Roch later proved the
conjecture using a reverse coupling argument~\cite{Mossel2010local2global}.
Global submodularity is algorithmically important because the classical greedy
algorithm gives a $(1-1/e)$-approximation for maximizing a nonnegative monotone
submodular function subject to a cardinality constraint
\cite{Nemhauser1978submodular}.  Thus, a local-to-global preservation theorem
connects structural assumptions on the diffusion process with approximation
guarantees for selecting seed nodes.

Chen et al.~\cite{adkConjecture} observed that the triggering model has a
stronger alternating-difference property: first-order differences are
nonnegative, second-order differences are nonpositive, third-order differences
are nonnegative, and the signs continue to alternate at higher orders.  They
therefore introduced the notion of {\em Alternating Difference-$k$}, or
$\AD$-$k$, which requires this alternating-sign condition through order $k$.
For example, the first-order difference is
\[
  f(S\cup\{v\})-f(S),
\]
and the second-order difference is
\[
  f(S\cup\{u,v\})-f(S\cup\{u\})-f(S\cup\{v\})+f(S).
\]
They conjectured that, for every directed graph and every relevant order $k$,
if all local influence functions are $\AD$-$k$, then the global influence
spread function is also $\AD$-$k$.  The $\AD$-$1$ property is monotonicity,
and $\AD$-$2$ is equivalent to monotonicity together with submodularity, so
the result of Mossel and Roch settles the cases $k=1,2$.  Chen et al. also
proved the conjecture for directed acyclic graphs (DAGs) at every order and for
general directed graphs in the $\AD$-$\infty$ case, where the alternating-sign
condition holds at all possible orders.  Thus, the remaining open cases were
cyclic graphs and finite orders $3\leq k<|V|$.  To the best of our knowledge,
these cases had not previously been resolved.

In this paper, we prove the full conjecture for every $k$ and every finite
directed graph.  
The proof utilizes M\"obius inversion, reverse reachable sets via triggering sets, and decision tree partition techniques, and
	it crucially extends the probability distribution of node triggering sets into
	a generalized algebraic structure allowing negative weights for triggering sets.
The reverse reachable sets were originally proposed as an important algorithmic technique for achieving near linear time scalable influence 
	maximization algorithms for the triggering model, a subclass of the general threshold model~\cite{borgs2014rrset,tang2014newrrset,tang2015rrset}.
Therefore, beyond closing the local-to-global $\AD$-$k$ conjecture,
	the extension to negative-weighted triggering sets may be of independent interest and may have further algorithmic implications for the general threshold model.

From a broader set-function perspective, these alternating-sign conditions are
instances of higher-order monotonicity for pseudo-Boolean functions
\cite{Sethpan2005booleanfunction}.  M\"obius transformations provide standard
coordinate representations of set functions and their higher-order
interactions \cite{Grabisch2000mobius}.  Related preservation results for
higher-order alternating functions under compounding were subsequently
developed by Ressel \cite{Ressel2023compounding}, who
	presented a shorter and more general proof of the compounding theorem that forms a principal analytic 
	ingredient in \cite{adkConjecture} for DAGs. 
His result does not address the feedback dependencies arising from directed 
	cycles and therefore does not resolve general local-to-global $\AD$-$k$ conjecture for general cyclic graphs and finite \(k\).

The paper is organized as follows.
Section~\ref{sec:model} formally defines the general threshold model and the
$\AD$-$k$ property and states the main local-to-global theorem.  Section~\ref{sec:signed-coordinates}
uses M\"obius inversion to represent each local influence function by a
normalized signed measure over triggering sets, and Section~\ref{sec:signed-cycles}
proves that this signed triggering representation remains algebraically
equivalent to the live-edge graph reachability even on graphs with cycles.
Section~\ref{sec:cylinders} expresses each alternating difference of a
target-activation probability as the signed mass of a corresponding live-edge
reachability cylinder.  Section~\ref{sec:decision-tree} constructs a two-phase
edge-query decision tree that partitions this cylinder into disjoint classes
of triggering configurations in which at most $|A|$ incoming edges at any node
are required to be present.  Section~\ref{sec:main-proof} combines
this partition with the local $\AD$-$k$ interval inequalities to prove the
theorem.
We conclude the paper with Section~\ref{sec:conclude}.

\section{Model and Statement}
\label{sec:model}

Let $G=(V,E)$ be a finite directed graph with $V$ as the set of vertices or nodes and $E$ as the set of directed edges.  
The {\em general threshold model} is specified as follows.
Every vertex $v\in V$ has a local function
\[
  f_v:2^{N_v}\longrightarrow[0,1],
  \qquad f_v(\varnothing)=0,
\]
where $N_v$ is its set of in-neighbors.  The function is monotone.
Independently draw $\theta_v$ uniformly from $[0,1]$.  Starting from a seed
set $X$, an inactive vertex $v$ becomes active once
\[
  f_v(\text{active in-neighbors of }v)\geq \theta_v.
\]
Activation is progressive.  Let $p_t(X)$ be the probability that target $t$
is eventually active, and let
\[
  \sigma(X)=\sum_{t\in V}p_t(X),
\]
where $\sigma(X)$ is called the {\em influence spread} of seed set $X$.

The {\em triggering model} \cite{kempe2003infmax} is a subclass of the general threshold model, and it is conceptually related to the proof of the paper.
In the triggering model, every node $v\in V$ has a random triggering set $T_v\subseteq N_v$ sampled from $v$'s in-neighbors. 
All triggering sets $(T_v)_{v\in V}$ collected form a {\em live-edge graph} $L = (V, E_L)$, where $E_L = \{(u,v) | u\in T_v, v\in V\}$.
Diffusion from the seed set $X$ follows the reachability relation on the live-edge graph, that is, all nodes that can be reached from $X$ in the live-edge graph $L$ are activated.
It is easy to verify that if for every $v$, we set $f_v = \sum_{T_v \subseteq N_v, T_v\cap S \ne \emptyset} D_v(T_v)$, where $D_v(T_v)$ is the probability that $T_v$ is sampled, then
	we can show that the trigger model is a special case of the general threshold model.

For disjoint $A,S\subseteq V$, define the set difference
\begin{equation}
  \Delta_A g(S)
  =\sum_{B\subseteq A}(-1)^{|A|-|B|}g(S\cup B).
  \label{eq:difference}
\end{equation}
A set function $g$ is $\AD$-$k$ if
\begin{equation}
  (-1)^{|A|+1}\Delta_Ag(S)\geq0
  \label{eq:adk}
\end{equation}
whenever $1\leq |A|\leq k$ and $A\cap S=\varnothing$.

The main result we aim to prove is the follow theorem:
\begin{theorem}[Local-to-global $\AD$-$k$]
\label{thm:main}
If every local function $f_v$ is $\AD$-$k$, then every target activation
probability $p_t$ and the influence spread $\sigma$ are $\AD$-$k$.
\end{theorem}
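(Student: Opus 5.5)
The plan follows the roadmap in the introduction: represent each local function by a signed triggering measure, write the global differences as signed masses of reachability events, and then partition those events so that only local difference inequalities are ever used.

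\textbf{Step 1 (signed triggering sets).} Use M\"obius inversion on each $f_v$. Write
\[
  f_v(S)=\sum_{T\subseteq N_v,\;T\cap S\neq\varnothing} w_v(T),
\]
where $w_v$ is a signed weight on subsets of $N_v$ with $\sum_T w_v(T)=1$; the mass on $T=\varnothing$ absorbs $1-f_v(N_v)$. Such a $w_v$ exists and is unique, since $1-f_v(N_v\setminus R)$ equals the mass of sets contained in $R$. In general $w_v$ is not nonnegative. The $\AD$-$k$ hypothesis says exactly that, for every $A$ with $|A|\le k$, the alternating difference $(-1)^{|A|+1}\Delta_A f_v(S)$ equals the $w_v$-mass of an interval. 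That interval is the family of triggering sets $T$ with $T\supseteq$ a transversal meeting every element of $A$ and $T\cap S=\varnothing$; more precisely, it is the mass of sets $T$ with $T\cap S=\varnothing$ and $T\cap\{a\}\neq\varnothing$ for all $a\in A$. I take these mass inequalities as the only input from the local hypothesis.

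\textbf{Step 2 (algebraic live-edge equivalence on cyclic graphs).} Form the product signed measure $W=\prod_v w_v$ on configurations $(T_v)_v$. I will show that
\[
  p_t(X)=W\bigl(\{t\text{ reachable from }X\text{ in the live-edge graph}\}\bigr).
\]
Since the measure is signed, the probabilistic coupling is unavailable. Instead I compare both sides as multilinear polynomials in the $f_v$ values. The threshold process with fixed thresholds is equivalent to a fixed-point characterization: the final active set is the least fixed point. I will prove the identity by induction on the number of vertices. Condition on the first activation layer, or alternatively eliminate one vertex $v$ by expanding $W$ over $T_v$ and using that $W$ restricted to the other coordinates yields the GT model on a smaller graph with modified seeds. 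Both sides are polynomial in the parameters $\{w_v(T)\}$, and they agree whenever all $w_v\ge0$, which is the triggering-model case of Kempe et al. Nonnegative measures form a full-dimensional subset of the affine space $\sum_T w_v(T)=1$, so polynomial identity extends the equality to all signed $w_v$. This argument seems cleanest and I would use it.

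\textbf{Step 3 (cylinders).} By linearity,
\[
  (-1)^{|A|+1}\Delta_A p_t(S)=W(C_{A,S,t}),
\]
where $C_{A,S,t}$ is the inclusion--exclusion event that $t$ is reached from $S\cup\{a\}$ for every $a\in A$ but not from $S$. This holds because the indicator of ``reachable from $S\cup B$'' is the union over $B$ of the individual-source events, and inclusion--exclusion of unions gives the alternating sign.

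\textbf{Step 4 (decision-tree partition: the main obstacle).} The mass $W(C)$ need not be a product of local intervals, since paths from different $a$ interact through cycles. I will build an adaptive edge-query tree in two phases, in the spirit of reverse reachable-set exploration. Phase one runs a reverse BFS from $t$, avoiding nodes reachable from $S$. At each node $v$ it queries $T_v$ only through the events ``$T_v\cap S'=\varnothing$'', where $S'$ is the set of already-explored nodes reachable from $S$, and ``$T_v$ meets $U$'', where $U$ is a specific explored set. Phase two branches on which of the $a\in A$ each node's requirement serves. The leaves of this tree are classes in which every node's constraint is of the form ``$T_v$ avoids a set and meets at most $|A|\le k$ prescribed disjoint sets''. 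The key is to show two things. First, the leaves are disjoint and cover $C$. Second, the leaf condition, including adaptivity, factorizes as a product over nodes of local interval events; this holds because each node is queried at most once, in a tree order. Once the factorization holds, each leaf mass is a product of local $\AD$-$j$ masses with $j\le k$, hence nonnegative, so $W(C)\ge0$. Nonnegativity of $\sigma$'s differences then follows by summing over $t$.

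The hardest point is guaranteeing that each node needs at most $|A|$ ``meets'' constraints and is never revisited, despite cycles. I would first try to order the $a\in A$ and find, for each one, the first-hit path in a canonical BFS. The first time a path for $a_i$ enters an already-explored node, it stops there, which caps the number of constraints per node at $|A|$.
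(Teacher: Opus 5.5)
Your Steps 1--3 are sound and agree with the paper. The signed weights and interval masses are Lemma~\ref{lem:interval}, and Step~3 is Lemma~\ref{lem:cylinder}. Step~2 is the paper's polynomial-continuation remark; you would still need to justify that the threshold-side probability is a single polynomial in the $f_v$ tables on all monotone $f_v$, which is what the transcript argument of Lemma~\ref{lem:simulation} provides.

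The genuine gap is Step~4. It carries the whole difficulty, and you leave it as a sketch whose main ingredient fails.
\begin{itemize}
\item A backward search from $t$ cannot ``avoid nodes reachable from $S$'' without already knowing $\Reach(S)$, which is a forward notion.
\item A backward search that learns which in-neighbors of $v$ are live must put every in-neighbor it discovers through $v$ into the present part of $T_v$. So $|P_v|$ is bounded only by the in-degree, not by $|A|$.
\item Queries of the form ``$T_v$ meets $U$'' do not say which in-neighbor is live, so they cannot drive such a search.
\item The decreasing event ``no $s\in S$ reaches $t$'' cannot be partitioned into pure avoid-boxes $\prod_v[\varnothing,U_v]$. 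Every such box contains the all-empty configuration, so a disjoint partition would be a single box, which the event generally is not. Certifying the $S$-side therefore forces some present-edge constraints.
\end{itemize}
Your leaf description attributes all ``meets'' constraints to the $|A|$ paths and never shows that these $S$-side constraints do not stack on top of them. A priori a node could need $|A|+1$, which $\AD$-$k$ does not control when $|A|=k$.

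The missing idea is the paper's two-phase \emph{forward} tree (Lemma~\ref{lem:partition}).
\begin{itemize}
\item Phase~I runs a forward search from $S$, querying only edges leaving the discovered set. This computes $D=\Reach(S)$, fixes exactly one present in-edge (the discovering one) at each node of $D\setminus S$, and fixes every edge from $D$ to $V\setminus D$ absent. The tree rejects if $t\in D$.
\item Phase~II runs, for each $a\in A\setminus\{t\}$, a forward search from $a$ inside $V\setminus D$. It reuses earlier answers and skips heads already discovered in the current search, so each search fixes at most one new present in-edge per head.
\end{itemize}
Completeness holds because $D$ is forward-closed and $t\notin D$, so no $a$-to-$t$ path can enter $D$. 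Phase-I present edges have heads in $D$ and Phase-II present edges have heads outside $D$, so the two counts never add. This gives $|P_v|\le\max\{1,|A\setminus\{t\}|\}\le|A|$ at every node.

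Finally, each query reads a single membership bit of a single $T_v$, so every leaf is automatically a product of Boolean intervals $\prod_v[P_v,N_v\setminus Z_v]$. Your requirement that each node be queried at most once is unnecessary; the paper's tree queries the same $T_v$ in both phases and across several searches. The real issue is bounding the lower endpoints, which your sketch does not settle.
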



\section{M\"obius Inversion and Signed Triggering Weights}
\label{sec:signed-coordinates}

Fix $v$ and abbreviate $N=N_v$ and $f=f_v$.  Define a signed measure $q$ on
subsets of $N$ by
\begin{equation}
  q(\varnothing)=1-f(N),
  \qquad
  q(T)=(-1)^{|T|+1}\Delta_Tf(N\setminus T)
  \quad(T\neq\varnothing).
  \label{eq:qdef}
\end{equation}
Notation $[A,B]$ with $A\subseteq B \subseteq V$ denotes $\{S\subseteq V | A\subseteq S \subseteq B\}$ and is called
a {\em Boolean interval}.

\begin{lemma}[Boolean Interval Masses]
\label{lem:interval}
The signed measure $q$ satisfies
\begin{equation}
  \sum_{T\subseteq N}q(T)=1,
  \qquad
  f(X)=\sum_{T:\,T\cap X\neq\varnothing}q(T).
  \label{eq:hitting}
\end{equation}
For every $L\subseteq U\subseteq N$,
\begin{equation}
  q([L,U])
  :=\sum_{L\subseteq T\subseteq U}q(T)
  =
  \begin{cases}
    1-f(N\setminus U),
      &L=\varnothing,\\[2mm]
    (-1)^{|L|+1}\Delta_Lf(N\setminus U),
      &L\neq\varnothing.
  \end{cases}
  \label{eq:intervalmass}
\end{equation}
\end{lemma}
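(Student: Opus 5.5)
I will prove the interval formula \eqref{eq:intervalmass} first, since both identities in \eqref{eq:hitting} follow from it as special cases. The key move is to rewrite $q$ in terms of the complementary function $h(Y):=1-f(N\setminus Y)$ on $2^N$. For $T\neq\varnothing$ I expand
\[
\Delta_Tf(N\setminus T)=\sum_{B\subseteq T}(-1)^{|T|-|B|}f\bigl((N\setminus T)\cup B\bigr).
\]
Substituting $C=T\setminus B$, so that $(N\setminus T)\cup B=N\setminus C$, this becomes $\sum_{C\subseteq T}(-1)^{|C|}f(N\setminus C)$. Because $\sum_{C\subseteq T}(-1)^{|C|}=0$ for $T\neq\varnothing$, I can replace $f(N\setminus C)$ by $-h(C)$ and obtain
\[
q(T)=(-1)^{|T|}\sum_{C\subseteq T}(-1)^{|C|}h(C)=\sum_{C\subseteq T}(-1)^{|T|-|C|}h(C).
\]
The same formula holds at $T=\varnothing$, where it reads $q(\varnothing)=h(\varnothing)=1-f(N)$. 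So $q$ is exactly the Möbius inverse of $h$, and Möbius inversion on the Boolean lattice gives $h(U)=\sum_{T\subseteq U}q(T)$ for every $U\subseteq N$.

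From this the case $L=\varnothing$ of \eqref{eq:intervalmass} is immediate: $q([\varnothing,U])=h(U)=1-f(N\setminus U)$. For general $L$, I use inclusion–exclusion over the elements of $L$ that are forced into $T$:
\[
q([L,U])=\sum_{B\subseteq L}(-1)^{|B|}\,q([\varnothing,U\setminus B])=\sum_{B\subseteq L}(-1)^{|B|}\bigl(1-f((N\setminus U)\cup B)\bigr).
\]
When $L\neq\varnothing$ the constant term sums to zero, leaving
\[
-\sum_{B\subseteq L}(-1)^{|B|}f\bigl((N\setminus U)\cup B\bigr)=(-1)^{|L|+1}\Delta_Lf(N\setminus U),
\]
where the last equality uses \eqref{eq:difference} with $L\cap(N\setminus U)=\varnothing$, which holds because $L\subseteq U$.

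The two identities in \eqref{eq:hitting} then follow. Total mass is $q([\varnothing,N])=1-f(\varnothing)=1$. For the hitting identity, $\sum_{T\cap X=\varnothing}q(T)=q([\varnothing,N\setminus X])=1-f(X)$; subtracting this from the total mass gives $f(X)$.

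The only step needing care is the reindexing $C=T\setminus B$ together with the sign bookkeeping that turns $q$ into the Möbius inverse of $h$. Everything else is standard Möbius inversion plus the observation that alternating sums over nonempty sets kill constants. I would double-check the small cases $|T|=1$ (where $q(\{u\})=f(N)-f(N\setminus\{u\})$) and $|T|=2$ against the formula to confirm the signs.
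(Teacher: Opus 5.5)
Your proof is correct and follows the paper's own route: your $h$ is the paper's $r(U)=1-f(N\setminus U)$, you identify $q$ as its M\"obius transform using the same reindexing $C=T\setminus B$ and the vanishing of alternating constants, and then you finish with Boolean M\"obius inversion plus inclusion--exclusion on $L$. The only differences are the order of steps (you derive \eqref{eq:hitting} from the interval formula, whereas the paper derives it directly from $r$) and that you write out the inclusion--exclusion step, which the paper leaves implicit.
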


\begin{proof}
Define
\[
  r(U):=1-f(N\setminus U),
  \qquad U\subseteq N.
\]
We first show that \eqref{eq:qdef} defines $q$ as the M\"obius transform of
$r$ on the Boolean lattice $2^N$.  For $T=\varnothing$, this follows directly
from
\[
  r(\varnothing)=1-f(N)=q(\varnothing).
\]
Now let $T\neq\varnothing$.  Then
\begin{align*}
  \sum_{C\subseteq T}(-1)^{|T|-|C|}r(C)
  & = \sum_{C\subseteq T}(-1)^{|T|-|C|}
      -\sum_{C\subseteq T}(-1)^{|T|-|C|}f(N\setminus C)\\
  & = -\sum_{C\subseteq T}(-1)^{|T|-|C|}f(N\setminus C).
\end{align*}
Make the change of variables $B=T\setminus C$.  Since
\[
  |T|-|C|=|B|
  \quad\text{and}\quad
  N\setminus C=(N\setminus T)\cup B,
\]
we obtain
\[
  \sum_{C\subseteq T}(-1)^{|T|-|C|}r(C)
  =-\sum_{B\subseteq T}(-1)^{|B|}
       f((N\setminus T)\cup B).
\]
On the other hand, the definition of the difference operator gives
\begin{align*}
  (-1)^{|T|+1}\Delta_Tf(N\setminus T)
  & =(-1)^{|T|+1}
      \sum_{B\subseteq T}(-1)^{|T|-|B|}
       f((N\setminus T)\cup B)\\
  & =-\sum_{B\subseteq T}(-1)^{|B|}
       f((N\setminus T)\cup B).
\end{align*}
Thus \eqref{eq:qdef} is equivalently
\[
  q(T)=\sum_{C\subseteq T}(-1)^{|T|-|C|}r(C)
  \qquad\text{for every }T\subseteq N.
\]

The Boolean-lattice M\"obius inversion formula (see, e.g.,
\cite{Grabisch2000mobius}) now gives
\[
  r(U)=\sum_{T\subseteq U}q(T).
\]

Taking $U=N$ and using $f(\varnothing)=0$ gives
\[
  \sum_{T\subseteq N}q(T)=r(N)=1.
\]
Moreover,
\begin{align*}
  \sum_{T:\,T\cap X\neq\varnothing}q(T)
  & =r(N)-r(N\setminus X)\\
  & =1-[1-f(X)]
   =f(X),
\end{align*}
which proves \eqref{eq:hitting}.  Finally, inclusion--exclusion for the
requirement $L\subseteq T$, together with the formula for $r$, gives
\eqref{eq:intervalmass}.
\end{proof}

If $f$ is $\AD$-$k$, Lemma~\ref{lem:interval} has the immediate consequence
\begin{equation}
  q([L,U])\geq0
  \qquad\text{whenever }L\subseteq U\subseteq N\text{ and }|L|\leq k.
  \label{eq:positiveinterval}
\end{equation}
For nonempty $L$, this is precisely a local $\AD$-$k$ inequality at base
$N\setminus U$.  For empty $L$, it follows from $f\leq1$.  Individual atoms
$q(T)$ of rank greater than $k$ may still be negative, and $q$ is not being
treated as a probability distribution.

\section{Signed Triggering and Live-edge Graph Reachability}
\label{sec:signed-cycles}

With the result in the previous section, we can set up the reverse sampling view of the diffusion, in the same spirit as
	originally proposed by \cite{borgs2014rrset}.
For a trigger choice $T_v\subseteq N_v$, put a live edge $u\to v$ exactly when
	$u\in T_v$.  
Collectively, $(T_v)_{v\in V}$ provide a live-edge graph of the diffusion.
For a given target $t \in V$, the set of nodes reachable to $t$ in the live-edge graph is the reverse reachable set $R_t$ of $t$.
In the original Triggering model of \cite{kempe2003infmax}, every triggering set $T_v$ has a non-negative probability and these probabilities sum to $1$
	for each $v$.
With the result in the previous section, in particular Lemma~\ref{lem:interval}, we extend this algebraically such that the atomic $q(T_v)$'s admit negative weights so 
	they are no longer probabilities, but they still sum to $1$.
The key of our analysis is to continue the live-edge graph and reverse reachability argument with this algebraic extension, despite the negative weights at the
	atomic level.

For a seed set $X$ and a threshold realization
$\boldsymbol{\theta}=(\theta_v)_{v\in V}$, let
$A_\infty(X;\boldsymbol{\theta})\subseteq V$ denote the set of nodes active
when the diffusion terminates.  A {\em terminal-state statistic} is any
function
\[
  H:2^V\longrightarrow\mathbb{R}
\]
whose value depends only on this final active set, rather than on the order or
times at which nodes were activated.  Examples include the final number of
active nodes, $H(Y)=|Y|$, and the indicator that a fixed target $t$ is active,
$H_t(Y)=\1\{t\in Y\}$.  For a triggering-set configuration
$\boldsymbol{T}=(T_v)_{v\in V}$, let $L_{\boldsymbol{T}}(X)$ denote the
live-edge closure of $X$, namely, the set of all nodes reachable from $X$ in
the live-edge graph induced by $\boldsymbol{T}$.

\begin{lemma}[Signed Triggering Simulation]
\label{lem:simulation}
For every seed set $X$ and every terminal-state statistic $H$,
\begin{equation}
  \mathbb{E}_{\boldsymbol{\theta}}
  \bigl[H(A_\infty(X;\boldsymbol{\theta}))\bigr]
  =\sum_{\boldsymbol{T}=(T_v)_{v\in V}}
     \left(\prod_{v\in V}q_v(T_v)\right)
     H(L_{\boldsymbol{T}}(X)).
  \label{eq:simulation}
\end{equation}
This holds on arbitrary directed graphs, including graphs with cycles.
\end{lemma}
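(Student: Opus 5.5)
The plan is a hybrid (node-by-node replacement) argument. It rests on one structural observation: once the randomness of all other nodes is fixed, the final active set depends on node $v$'s rule only through a \emph{single} evaluation of that rule. For each node $v$, both the threshold rule and the triggering rule can be written as a monotone Boolean function on $2^{N_v}$: either
\[
\varphi_v^{\theta}(S)=\1\{f_v(S)\ge\theta_v\}
\qquad\text{or}\qquad
\varphi_v^{T}(S)=\1\{T_v\cap S\neq\varnothing\}.
\]
In either case I first check that the progressive diffusion terminates in the least fixed point of the monotone map
\[
Y\mapsto X\cup\{v:\varphi_v(Y\cap N_v)=1\}.
\]
This is standard because activation is progressive and each $\varphi_v$ is monotone. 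For the triggering rules this least fixed point is exactly the live-edge closure $L_{\boldsymbol T}(X)$. Hence both sides of \eqref{eq:simulation} are sums or integrals of the single functional $F(\varphi)=H(\mathrm{lfp}_X(\varphi))$, evaluated at different families of rules $\varphi=(\varphi_v)_{v\in V}$.

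The key step is an affine decomposition in one coordinate, valid on cyclic graphs. Fix $v$ and all other rules $(\varphi_u)_{u\neq v}$, arbitrary monotone. Let $A_0$ be the least fixed point when $v$ is forced never to activate. Let $A_1$ be the least fixed point when $v$ is added to the seed set. I then claim
\[
F(\varphi)=H(A_0)+\bigl[H(A_1)-H(A_0)\bigr]\,\varphi_v(A_0\cap N_v).
\]
The case $\varphi_v(A_0\cap N_v)=0$ is immediate: $A_0$ is then a fixed point of the full system, and it is contained in the full least fixed point because blocking $v$ can only shrink things. In the case $\varphi_v(A_0\cap N_v)=1$, $v$ belongs to the full least fixed point (which contains $A_0$). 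That least fixed point is then sandwiched between $A_1$ and itself, since $A_1$ is a fixed point containing $X\cup\{v\}$. This is the step I expect to need the most care. Cycles through $v$ mean $v$'s activation can feed back, so I will argue purely via minimality and monotonicity of the least fixed point, not via activation times. The essential point is that $A_0$ and $A_1$ do not depend on $\varphi_v$.

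With the decomposition in hand, the replacement is routine. Order the vertices $v_1,\dots,v_n$. Define hybrid quantity $I_j$ by taking nodes $v_1,\dots,v_j$ to use signed triggering sets weighted by $q_{v_i}$, and the remaining nodes to use independent uniform thresholds. Every hybrid is a finite signed sum combined with a bounded integral, so Fubini applies freely. To pass from $I_{j-1}$ to $I_j$, condition on everything except node $v_j$ and use the affine form. Averaging the $\theta$-rule gives $H(A_0)+[H(A_1)-H(A_0)]f_{v_j}(S)$ with $S=A_0\cap N_{v_j}$. The signed triggering sum gives $\sum_T q_{v_j}(T)\bigl(H(A_0)+[H(A_1)-H(A_0)]\1\{T\cap S\ne\varnothing\}\bigr)$. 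By \eqref{eq:hitting} of Lemma~\ref{lem:interval}, namely $\sum_T q(T)=1$ and $\sum_{T\cap S\neq\varnothing}q(T)=f(S)$, these two expressions agree.

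Thus $I_{j-1}=I_j$. Since $I_0$ is the left side of \eqref{eq:simulation} and $I_n$ is the right side, the lemma follows. Negativity of individual atoms $q(T)$ plays no role, because only the total mass and the hitting masses are used.
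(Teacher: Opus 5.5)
Your proposal is correct, and it takes a genuinely different route from the paper's proof. The paper runs a sequential exposure procedure. Each complete transcript fixes, at every node $v$, a nested chain $C_0\subset\cdots\subset C_m$ of queried active in-neighbor sets and the index of the first successful test. The threshold probability of that local answer pattern is $f_v(C_i)-f_v(C_{i-1})$ (or $1-f_v(C_m)$), and by the hitting identity the $q_v$-mass of triggering sets producing the same pattern is identical. Using independence and adaptivity, the paper argues that both models give every transcript the same product of local factors, and it sums over transcripts.

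You instead swap one node at a time. Your key fact is that, with all other rules frozen, the terminal set depends on $\varphi_v$ only through the single bit $\varphi_v(A_0\cap N_v)$, where $A_0$ and $A_1$ do not depend on $\varphi_v$. Your least-fixed-point argument for this is sound, with two small points to tighten:
\begin{itemize}
\item In the second case your ``sandwiched'' sentence should state both inclusions explicitly. First, $A^*\subseteq A_1$, because $A_1$ is closed under the full system (trivially at $v$, since $v\in A_1$). Second, $A_1\subseteq A^*$, because $A^*$ contains $v$ and is closed under the $v$-seeded system.
\item If $v\in X$, read ``forced never to activate'' as $\varphi_v\equiv 0$. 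Then $A_0=A_1$ and the formula holds trivially.
\end{itemize}

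What your route buys: you never formalize transcripts, and you never need the claim that the configurations consistent with an adaptive transcript form a product of local events. Cycles are handled entirely by minimality and monotonicity of least fixed points. You also get a cleaner general principle: any two (possibly signed) product families of monotone rules with unit total mass and the same activation marginals $S\mapsto\mathbb{E}[\varphi_v(S)]$ give the same expectation of every terminal-state statistic. What the paper's route buys is a pathwise, transcript-by-transcript matching of masses. That matching fits naturally with the edge-query decision-tree viewpoint used later in Section~\ref{sec:decision-tree}.
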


\begin{proof}
We expose the diffusion using a fixed fair procedure that repeatedly scans the
inactive nodes.  Whenever the set of active in-neighbors of an inactive node
$v$ has grown, the procedure asks whether $v$ should activate.  Because the
diffusion is progressive and every $f_v$ is monotone, this exposure procedure
produces the same terminal active set as the usual synchronous process: both
compute the least stable active set containing $X$.

Here a transcript records the sequence of activation tests, together with the
node tested, its current set of active in-neighbors, and the yes/no outcome of
the test; it is complete when the exposure procedure has terminated.  For each
nonseed node $v$, the distinct active in-neighbor sets appearing in a complete
transcript before $v$ activates form a nested chain
\[
  C_0\subset C_1\subset\cdots\subset C_m.
\]
For notational convenience, put $C_{-1}=\varnothing$ and recall that
$f_v(\varnothing)=0$.  Consider a complete transcript in which the activation
test for $v$ fails at $C_0,\ldots,C_{i-1}$ and first succeeds at $C_i$
(where $i=0$ is allowed).  In the threshold model this means
\[
  f_v(C_{i-1})<\theta_v\leq f_v(C_i),
\]
so the probability of this local answer pattern is
\begin{equation}
  f_v(C_i)-f_v(C_{i-1}).
  \label{eq:thresholdtranscript}
\end{equation}
If $v$ never activates, the corresponding probability is
$1-f_v(C_m)$.  Repeated query sets may be deleted, since they reveal no new
answer.

Now replace the threshold $\theta_v$ by a triggering set $T_v$ with signed
weights $q_v$.  The test at $C$ succeeds exactly when
$T_v\cap C\neq\varnothing$.  Because the sets $C_j$ are nested, the same local
answer pattern occurs exactly when $T_v$ misses $C_{i-1}$ but intersects
$C_i$.  By \eqref{eq:hitting}, its signed mass is
\begin{equation}
  [1-f_v(C_{i-1})]-[1-f_v(C_i)]
  =f_v(C_i)-f_v(C_{i-1}),
  \label{eq:triggertranscript}
\end{equation}
which is identical to \eqref{eq:thresholdtranscript}.  Likewise, the signed
mass of triggering sets that miss $C_m$ is $1-f_v(C_m)$, matching the event
that $v$ never activates.

Although the sets queried at one node may depend on answers previously
obtained at other nodes, this adaptivity causes no problem.  Once a complete
transcript is fixed, the query chain and answer pattern at every node are
fixed.  Independence of the thresholds makes the probability of that
transcript the product of its local factors.  The product signed measure
$\prod_v q_v$ assigns the same product of local factors to the corresponding
trigger transcript by \eqref{eq:thresholdtranscript} and
\eqref{eq:triggertranscript}.  A seeded or otherwise never-queried node
contributes the factor one in both models: ordinary probability one versus
$\sum_{T_v\subseteq N_v}q_v(T_v)=1$.

Finally, sum over all complete transcripts.  Under the trigger interpretation,
a node becomes active precisely when it is reachable from $X$ by live edges,
so the terminal active set is $L_{\boldsymbol{T}}(X)$.  This proves
\eqref{eq:simulation}.  Cycles do not affect the argument: progressiveness
still makes every local query sequence nested and ensures that the exposure
procedure terminates after finitely many activations.  The proof uses only
finite algebraic sums and never assumes that the individual values
$q_v(T_v)$ are nonnegative.
\end{proof}

\begin{remark}[Polynomial-continuation check]
There is a second algebraic justification.  Terminal-set probabilities in the
threshold model are polynomials in the local table entries $f_v(Y)$.  Signed
live-edge probabilities are also polynomials, because $f_v$ and $q_v$ are
related by an invertible affine transformation.  The two polynomials agree on
the full-dimensional open set of strictly positive genuine trigger
distributions; hence they agree identically.  This gives an alternative proof
of Lemma~\ref{lem:simulation}.
\end{remark}

We now specialize Lemma~\ref{lem:simulation} to the terminal-state statistic
$H_t(Y)=\1\{t\in Y\}$.  Its threshold-model expectation is exactly the target
activation probability $p_t(X)$.  For a triggering configuration
$\boldsymbol{T}$, let $R_t(\boldsymbol{T})$ denote the reverse reachable set of
$t$ in its live-edge graph.  The target belongs to
$L_{\boldsymbol{T}}(X)$ exactly when at least one seed in $X$ can reach $t$,
or equivalently when $X\cap R_t(\boldsymbol{T})\neq\varnothing$.  Hence
Lemma~\ref{lem:simulation} gives
\begin{equation}
	\label{eq:ptx}
p_t(X) = \sum_{\boldsymbol{T}=(T_v)_{v\in V}}
\left(\prod_{v\in V}q_v(T_v)\right)
\1\{X\cap R_t(\boldsymbol{T}) \neq \varnothing\}.
\end{equation}

\section{Alternating Differences as Reachability Cylinders}
\label{sec:cylinders}

Fix a live-edge graph and a target $t$. 
For simplicity, let $R_t$ be the
reverse-reachable set of $t$, including $t$ itself.  The target activates from
seeds $X$ exactly when $X\cap R_t\neq\varnothing$.
Define a set function $h_{R_t}$ as $h_{R_t}(X) = \1\{X\cap R_t\neq\varnothing\}$ for every $X \subseteq V$.

The term \emph{cylinder} comes from the language of product spaces.  Represent
a subset $R\subseteq V$ by its Boolean membership vector
$(\1\{u\in R\})_{u\in V}\in\{0,1\}^V$.  A cylinder event fixes the values of
some coordinates of this vector while leaving all other coordinates
unrestricted.  In particular, the event
$\{A\subseteq R_t,\ S\cap R_t=\varnothing\}$ fixes the reachability coordinate
to one for every node in $A$ and to zero for every node in $S$, while imposing
no condition on the remaining nodes; we therefore call it a \emph{reachability
cylinder}.  The cylinder identity below says that the signed alternating
difference of the target-activation indicator is exactly the indicator of this
event.  When this event is pulled back to the underlying live-edge choices, its
edge constraints need not themselves be a simple product event; the decision
tree in Section~\ref{sec:decision-tree} will partition it into manageable
product rectangles.

\begin{lemma}[Cylinder Identity]
\label{lem:cylinder}
For nonempty $A$ disjoint from $S$,
\begin{equation}
  (-1)^{|A|+1}\Delta_A
  h_{R_t}(S)
  =\1\{A\subseteq R_t,\ S\cap R_t=\varnothing\}.
  \label{eq:cylinderidentity}
\end{equation}
\end{lemma}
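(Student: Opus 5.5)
The plan is to compute $\Delta_A h_{R_t}(S)$ directly from the definition \eqref{eq:difference}, by rewriting the indicator in complementary form. Since $h_{R_t}(X)=1-\1\{X\cap R_t=\varnothing\}$, and $\sum_{B\subseteq A}(-1)^{|A|-|B|}=0$ for nonempty $A$, the constant term drops out. This gives
\[
  \Delta_A h_{R_t}(S)=-\sum_{B\subseteq A}(-1)^{|A|-|B|}\,\1\{(S\cup B)\cap R_t=\varnothing\}.
\]
Because $S$ and $B$ are disjoint, the event $(S\cup B)\cap R_t=\varnothing$ factors as $\{S\cap R_t=\varnothing\}\cap\{B\cap R_t=\varnothing\}$. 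So the indicator of $S\cap R_t=\varnothing$ comes out of the sum as a common factor. If $S\cap R_t\neq\varnothing$, both sides of \eqref{eq:cylinderidentity} vanish, and it remains to treat the case $S\cap R_t=\varnothing$.

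In that case the sum runs over $B\subseteq A$ with $B\cap R_t=\varnothing$, i.e.\ over all subsets $B$ of $A':=A\setminus R_t$. Its value is
\[
  \sum_{B\subseteq A'}(-1)^{|A|-|B|}=(-1)^{|A|-|A'|}\sum_{B\subseteq A'}(-1)^{|A'|-|B|}.
\]
This equals $(-1)^{|A|}$ when $A'=\varnothing$, and $0$ otherwise, by the binomial identity. Hence
\[
  \Delta_A h_{R_t}(S)=-(-1)^{|A|}\,\1\{A\subseteq R_t\}\,\1\{S\cap R_t=\varnothing\}.
\]
Multiplying by $(-1)^{|A|+1}$ gives exactly the right-hand side of \eqref{eq:cylinderidentity}.

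There is no real obstacle here; the only step requiring care is the sign bookkeeping. I would double-check it on the case $|A|=1$, where the left side is $h(S\cup\{a\})-h(S)$. This equals $1$ precisely when $a\in R_t$ and $S\cap R_t=\varnothing$, which matches the claimed cylinder. The nonemptiness of $A$ is used only to kill the constant term, which is why the lemma assumes $A\neq\varnothing$.
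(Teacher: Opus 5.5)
Your proof is correct and is essentially the paper's argument. The paper splits into three cases: $S\cap R_t\neq\varnothing$ (every term equals one, so the alternating sum vanishes because $A\neq\varnothing$); some $a\in A\setminus R_t$ (pair $B$ with $B\cup\{a\}$ so the terms cancel); and $A\subseteq R_t$ (only $B=\varnothing$ differs from the rest). The pairing in the middle case is exactly the cancellation behind your identity $\sum_{B\subseteq A'}(-1)^{|A'|-|B|}=0$ for $A'\neq\varnothing$, and your switch to the complementary indicator $\1\{X\cap R_t=\varnothing\}$ simply combines the three cases into one closed-form computation. That indicator factors over $S$ and $B$ even without disjointness.
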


\begin{proof}
If $S\cap R_t\neq\varnothing$, every term in the alternating sum has value one
and the sum is zero.  
Now consider $S\cap R_t = \varnothing$ while some $a\in A$ lies outside $R_t$.
In this case, we pair every $B\subseteq A \setminus \{a\}$ with $B \cup \{a\}$.
We have $h_{R_t}(S\cup B) = h_{R_t}(S\cup B \cup \{a\})$ because adding $a$ cannot help activating $t$ ($a$ cannot reach $t$ in the fixed 
	live-edge graph).
But these two terms have opposite signs since $S\cup B \cup \{a\}$ has one more item than $S\cup B$.
Therefore, these two terms cancel in the alternating sum.
This means that when $S\cap R_t\neq\varnothing$ or $A \not\subseteq R_t$, the alternating sum is zero, matching the right-hand side.

Finally, if $S\cap R_t=\varnothing$ and every
$a\in A$ lies in $R_t$, the empty subset of $A$ contributes zero and all
nonempty subsets contribute one; their signed sum is one after multiplication
by $(-1)^{|A|+1}$.
\end{proof}

With the above lemma, we can do the following manipulations.
First, we substitute the signed triggering representation
\eqref{eq:ptx} into the definition of $\Delta_Ap_t(S)$ and interchange the two
finite sums: the sum over triggering configurations and the alternating sum
over subsets of $A$.  For each fixed triggering configuration
$\boldsymbol{T}$, the live-edge graph and its reverse reachable set
$R_t(\boldsymbol{T})$ are fixed.  Lemma~\ref{lem:cylinder} then says that this
configuration contributes its signed weight
$\prod_{v\in V}q_v(T_v)$ exactly when every node in $A$ can reach $t$ and no
node in $S$ can reach $t$; otherwise its contribution to the alternating
difference is zero.  Consequently,
$(-1)^{|A|+1}\Delta_Ap_t(S)$ is the sum of the signed weights of precisely the
configurations in the following event.  Thus, proving the desired inequality for $\AD$-$k$
is equivalent to proving that the product signed measure assigns nonnegative
total mass to
\begin{equation}
  \mathcal{C}(A,S;t)
  :=\{\text{every }a\in A\text{ reaches }t,
      \text{ and no }s\in S\text{ reaches }t\}.
  \label{eq:cylinderevent}
\end{equation}

\section{A Bounded-positive Edge-query Decision Tree}
\label{sec:decision-tree}

As discussed before, the triggering sets $(T_v)_{v \in V}$ form a live-edge graph that determines the diffusion process.
In this section, we apply the query view on the live-edge graph.
Instead of fixing all triggering sets at once, we query each edge $(u,v)\in E$ to check if it is present or absent in the live-edge graph.
At head $v$ of edge $(u,v)$, write $P_v$ for the set of tails fixed present and $Z_v$ for the
set fixed absent.  
Every unqueried membership bit is free.  
These queries form a binary decision tree.
Consequently the
leaf of the decision tree is the product rectangle
\begin{equation}
  \prod_{v\in V}[P_v,N_v\setminus Z_v].
  \label{eq:leafbox}
\end{equation}
This remains true for an adaptive tree: every completion consistent with a
transcript follows the same query path.

\begin{lemma}[Bounded-positive Partition]
\label{lem:partition}
For every nonempty $A$ disjoint from $S$ and every target $t$, the cylinder
$\mathcal{C}(A,S;t)$ has an edge-query decision tree such that every accepting
leaf satisfies
\begin{equation}
  |P_v|\leq |A|
  \qquad\text{for every }v\in V.
  \label{eq:leafbound}
\end{equation}
Equivalently, the cylinder admits a disjoint partition into product Boolean
intervals whose local lower endpoints have size at most $|A|$.
\end{lemma}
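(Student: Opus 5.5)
We plan to build the decision tree in two phases, each a reverse exploration from the target $t$. Phase one certifies that every $a\in A$ reaches $t$. Phase two certifies that no $s\in S$ reaches $t$. The accounting goal is that at each head $v$, an edge is recorded present only when it extends a reverse path that becomes part of a certificate for some element of $A$. Each $a\in A$ then charges at most one present in-edge per node, which gives $|P_v|\le|A|$.

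\textbf{Phase one (one reverse search per $a\in A$).} Enumerate $A=\{a_1,\ldots,a_m\}$ and handle the $a_j$ in turn. For each $a_j$ we run a deterministic reverse search (say DFS) from $t$ over the live-edge graph, querying in-edges $(u,v)$ of already-discovered nodes $v$ in a fixed order. When $(u,v)$ is found present and $u$ is newly discovered, we continue the search from $u$. When we reach $a_j$, we stop and retain the tree path from $a_j$ to $t$ as the witness for $a_j$. If the search exhausts the explored region without reaching $a_j$, the leaf rejects. Across different $a_j$ the searches reuse answers already obtained; an edge queried earlier is not queried again. The difficulty is that a search also records present edges that are not on the final witness path: these are dead branches that end in nodes unable to reach $a_j$.

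To control this, we intend to use the freedom to query only what is needed. Order the in-edges at $v$ and stop querying at $v$ once one present in-neighbor has been found that still leads toward the current goal. The exploration can then be arranged so that each search $j$ declares at most one in-edge present per head $v$: it is the first in-edge of $v$ (in the fixed order) that lies on a reverse path to $a_j$. Alternatively, we can restructure the search as a search outward from $a_j$ through out-edges toward the explored set. In that form each node gets at most one present in-edge per $a_j$, namely the edge by which it was first entered. Since $t$ was reached by earlier searches, search $j$ may also stop as soon as it hits a node already known to reach $t$. Summed over $j$, this gives at most $|A|$ present edges at each head.

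\textbf{Phase two (certifying that $S$ does not reach $t$).} Given the reverse-reachable structure, we must show that no $s\in S$ reaches $t$. Here we only query edges and terminate the search upon finding absence. Run a reverse search from $t$. At each discovered node $v$, query its unqueried in-edges. Whenever an edge $(u,v)$ is found present, $u$ joins the closure $R_t$, and the leaf rejects immediately if $u\in S$. The worry is that Phase two may mark many in-edges of a node present, once per in-neighbor lying in $R_t$. The cylinder does not actually care whether those edges are present. It only matters whether the reverse reachable set avoids $S$, and that event is determined by the status of edges into $R_t$ from $V\setminus R_t$. So we reorganize Phase two to compute the closure by querying only edges $(u,v)$ with $u$ not yet known to be in the closure. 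Edges from nodes already in the closure are left free, since their presence does not change $R_t$. A present answer then always introduces a new node, so each node $u$ is introduced once, by one edge. This ensures Phase two adds no extra present edges beyond one per newly discovered node. These introducing edges, however, could still pile up at a single head $v$, which is the critical point.

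\textbf{Main obstacle.} The crux is to ensure that Phase two's present edges do not push $|P_v|$ above $|A|$. The fix we propose is to query, in Phase two, edges \emph{out of} the frontier rather than into it: to grow $R_t$ we ask for each outside node $u$ whether some edge from $u$ into the current closure is live. We query $u$'s out-edges into the closure one at a time and stop at the first present one. Each present answer then lies at a distinct head only if we order queries so that each head $v$ accepts at most one newly introduced node before the others are routed elsewhere. We expect that this ordering argument, combined with the Phase one charging, yields the bound $|A|$. Verifying that no head receives more than $|A|$ present edges under this adaptive ordering is the step we expect to be hardest.
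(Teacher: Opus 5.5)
\textbf{The gap is in Phase two.} This is exactly the step you flag as hardest, and reordering queries cannot repair it. Any exploration that grows the reverse closure of $t$ fixes present an edge $(u,v)$ whose \emph{tail} $u$ is new and whose \emph{head} $v$ is already in the closure. Your observation that each node is introduced once, by one edge, therefore bounds present edges per tail. But $|P_v|$ counts present edges at the head, and those pile up.

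Here is a concrete failure. Let $G$ have edges $a\to t$, $u_i\to t$ and $s\to u_i$ for $i=1,\dots,n$. Take $A=\{a\}$, $S=\{s\}$, and the configuration $T_t=\{a,u_1,\dots,u_n\}$ with all other triggering sets empty. This configuration lies in $\mathcal{C}(A,S;t)$. An exploration that only queries edges entering the current closure cannot see $(s,u_i)$ until it has found $(u_i,t)$ present. It must resolve every $i$ to rule out $s$, so it fixes all $n$ edges $(u_i,t)$ present. Together with $(a,t)$, which Phase one must fix present, this gives $|P_t|=n+1>|A|$.

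Your proposed fix (query an outside node's out-edges into the closure and stop at the first live one) changes nothing here, since $(u_i,t)$ is the only out-edge of $u_i$ and there is nowhere else to route it. The same pile-up defeats the reverse-DFS form of your Phase one. The rule ``declare only the first in-edge that lies on a reverse path to $a_j$'' relies on information the tree has not queried. Only your forward-from-$a_j$ variant is sound.

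\textbf{The missing idea} is to certify the negative condition in the forward direction from $S$, where the one-edge-per-discovered-node charging lands on heads. Compute $D=\Reach(S)$ by querying only edges $u\to v$ with $u\in D$ and $v\notin D$, adding $v$ when the edge is present. Each present answer discovers its head, so every node receives at most one present in-edge and all such heads lie in $D$. At termination every edge leaving $D$ is fixed absent, and you reject if $t\in D$.

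\textbf{The order of the phases also matters.} Compute $D$ first. Then run your forward searches from each $a\in A\setminus\{t\}$ inside $V\setminus D$. This loses nothing: $D$ is forward-closed and $t\notin D$, so no $a$-to-$t$ path can enter $D$, and $a\in D$ means rejection. The heads fixed present in the two phases are then disjoint, which gives $|P_v|\le\max\{1,|A\setminus\{t\}|\}\le|A|$, as in the paper.

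With your order ($A$ first, on all of $G$), a dead branch can collide with the $S$-closure. Take live edges $a\to v$, $a\to t$, $s\to v$ and a dead edge $v\to t$, with $a\to v$ queried before $a\to t$. Then both $a\to v$ and $s\to v$ end up fixed present, so $|P_v|=2>|A|$.
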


\begin{proof}
Fix deterministic orders of vertices and potential edges.

\smallskip
\noindent\textbf{Phase I: expose the forbidden closure.}
Start a simultaneous forward search with discovered set $D=S$.  Query only
edges $u\to v$ with $u\in D$ and $v\notin D$.  When a queried edge is present,
add its head $v$ to $D$; skip all later edges whose head is already discovered.
Continue until no unqueried boundary edge remains.  Then
$D=\Reach(S)$ and $D$ is forward-closed under live edges.  Reject if $t\in D$.

Every present edge queried in Phase I discovers a new head.  Hence Phase I
fixes at most one present incoming edge per vertex, and every such head belongs
to the final $D$.  When $S=\varnothing$, we have $D=\varnothing$ and this phase
makes no query.

\smallskip
\noindent\textbf{Phase II: certify the required paths.}
Reject if any $a\in A$ belongs to $D$.  For each $a\in A\setminus\{t\}$, run a
forward search from $a$ to $t$ in the induced live-edge graph on $V\setminus D$.
Reuse earlier query answers, skip a head once it has been discovered in the
current search, and reject if a search is exhausted before finding $t$.
Accept after all searches succeed.

One source search fixes at most one new present incoming edge at any head.
Across $|A\setminus\{t\}|$ searches, Phase II fixes at most
$|A\setminus\{t\}|$ such edges per head, and every Phase-II head lies outside
$D$.  Phase-I positive heads lie inside $D$, so the phase bounds never add.
Therefore
\begin{equation}
  |P_v|\leq\max\{1,|A\setminus\{t\}|\}\leq |A|,
\end{equation}
where the last inequality also covers $A=\{t\}$.

It remains to prove recognition.  If the tree accepts, Phase I shows that no
member of $S$ reaches $t$, and Phase II shows that every $a\in A$ reaches $t$;
hence \eqref{eq:cylinderevent} holds.  Conversely, suppose
\eqref{eq:cylinderevent} holds.  Phase I returns $D=\Reach(S)$, with $t\notin D$.
No $a\in A$ belongs to $D$, since $S\to a$ and $a\to t$ would imply
$S\to t$.  Moreover an $a$-to-$t$ path cannot enter $D$: because $D$ is
forward-closed, it could not later leave to reach $t\notin D$.  Thus every
required path exists entirely inside $V\setminus D$, every Phase-II search
succeeds, and the tree accepts.  Its accepting leaves are disjoint by
construction, so they partition \eqref{eq:cylinderevent}.
\end{proof}

Intuitively, Lemma~\ref{lem:partition} converts the global reachability event
$\mathcal{C}(A,S;t)$ into disjoint local certificates that can be controlled
by the $\AD$-$k$ assumptions.  The decision tree first determines the region
$D=\Reach(S)$ generated by the forbidden base seeds, and then certifies the
required paths from the nodes in $A$ to $t$ entirely outside $D$.  Since the
present edges exposed in the first phase have heads inside $D$, while those
exposed in the second phase have heads outside $D$, the two contributions do
not accumulate at the same node; consequently, no local triggering set is
forced to contain more than $|A|$ queried-present elements.

This lemma is the central combinatorial bridge in the proof.  The signed
triggering representation and Lemma~\ref{lem:cylinder} identify the desired
global difference with the signed mass of $\mathcal{C}(A,S;t)$, but that mass
is not automatically nonnegative because individual triggering-set weights
may be negative.  Lemma~\ref{lem:partition} partitions the cylinder into
product Boolean intervals whose lower endpoints have size at most
$|A|\leq k$; Lemma~\ref{lem:interval} then guarantees that every local interval
factor, and hence every product box, has nonnegative signed mass.

\section{Proof of the Theorem}
\label{sec:main-proof}

\begin{proof}[Proof of Theorem~\ref{thm:main}]
Fix a target node $t\in V$, and let $A,S\subseteq V$ be arbitrary disjoint
sets with $1\leq |A|\leq k$.  A complete triggering-set configuration is
$\boldsymbol{T}=(T_v)_{v\in V}$.  It induces a fixed live-edge graph, and its
signed weight is
\[
  w(\boldsymbol{T})=\prod_{v\in V}q_v(T_v).
\]
Let $R_t(\boldsymbol{T})$ be the reverse reachable set of $t$ in this
live-edge graph, and define
\[
  h_{\boldsymbol{T}}(X)
  =\1\{X\cap R_t(\boldsymbol{T})\neq\varnothing\}.
\]
Thus, for the fixed live-edge graph induced by $\boldsymbol{T}$,
$h_{\boldsymbol{T}}(X)$ indicates whether the seed set $X$ activates $t$.
Equation~\eqref{eq:ptx}, which follows from the signed-trigger simulation in
Lemma~\ref{lem:simulation}, can be written as
\begin{equation}
  p_t(X)=\sum_{\boldsymbol{T}}w(\boldsymbol{T})
  h_{\boldsymbol{T}}(X).
  \label{eq:pt-signed-sum}
\end{equation}

We now apply the difference operator $\Delta_A$ to both sides.  By its
definition in 
\eqref{eq:difference} and the linearity of finite sums,
\begin{align}
  \Delta_Ap_t(S)
  &=\sum_{B\subseteq A}(-1)^{|A|-|B|}p_t(S\cup B) \notag\\
  &=\sum_{B\subseteq A}(-1)^{|A|-|B|}
    \sum_{\boldsymbol{T}}w(\boldsymbol{T})
    h_{\boldsymbol{T}}(S\cup B) \notag\\
  &=\sum_{\boldsymbol{T}}w(\boldsymbol{T})
    \sum_{B\subseteq A}(-1)^{|A|-|B|}
    h_{\boldsymbol{T}}(S\cup B) \notag\\
  &=\sum_{\boldsymbol{T}}w(\boldsymbol{T})
    \Delta_A h_{\boldsymbol{T}}(S).
  \label{eq:delta-pt-signed-sum}
\end{align}
For each fixed configuration $\boldsymbol{T}$, Lemma~\ref{lem:cylinder}
applies to its fixed live-edge graph.  Therefore,
\begin{align}
  (-1)^{|A|+1}\Delta_Ap_t(S)
  &=\sum_{\boldsymbol{T}}w(\boldsymbol{T})
    \1\{A\subseteq R_t(\boldsymbol{T}),\,
          S\cap R_t(\boldsymbol{T})=\varnothing\} \notag\\
  &=\sum_{\boldsymbol{T}\in\mathcal{C}(A,S;t)}
    \prod_{v\in V}q_v(T_v).
  \label{eq:delta-as-cylinder-mass}
\end{align}
In other words, the signed difference on the left-hand side of
\eqref{eq:delta-as-cylinder-mass} is exactly the product signed mass of the
reachability cylinder $\mathcal{C}(A,S;t)$ defined in
\eqref{eq:cylinderevent}.  Notice that this equality alone does not yet imply
nonnegativity, because the atomic values $q_v(T_v)$ may be negative.

We next use the bounded-positive partition from Lemma~\ref{lem:partition}.
Let $\mathcal{L}_{\mathrm{acc}}$ be the set of accepting leaves of its
edge-query decision tree.  These leaves give the {\em disjoint decomposition}
\begin{equation}
  \mathcal{C}(A,S;t)
  =\bigcup_{\ell\in\mathcal{L}_{\mathrm{acc}}}
    \prod_{v\in V}
    [P_v^{\ell},N_v\setminus Z_v^{\ell}],
  \label{eq:cylinder-leaf-partition}
\end{equation}
where Lemma~\ref{lem:partition} guarantees
\begin{equation}
  |P_v^{\ell}|\leq |A|\leq k
  \qquad
  \text{for every }v\in V
  \text{ and }\ell\in\mathcal{L}_{\mathrm{acc}}.
  \label{eq:accepting-leaf-bound}
\end{equation}
Because the signed measure is the product of the local signed measures and the
union in \eqref{eq:cylinder-leaf-partition} is disjoint, we obtain
\begin{align}
  &\sum_{\boldsymbol{T}\in\mathcal{C}(A,S;t)}
    \prod_{v\in V}q_v(T_v) \notag\\
  &\quad =
    \sum_{\ell\in\mathcal{L}_{\mathrm{acc}}}
    \prod_{v\in V}
    \left(
      \sum_{P_v^{\ell}\subseteq T_v
            \subseteq N_v\setminus Z_v^{\ell}}
      q_v(T_v)
    \right).
  \label{eq:cylinder-leaf-masses}
\end{align}

By Lemma~\ref{lem:interval}, each local factor in
\eqref{eq:cylinder-leaf-masses} equals
\begin{equation}
  \sum_{P_v^{\ell}\subseteq T_v
        \subseteq N_v\setminus Z_v^{\ell}}q_v(T_v)
  =
  \begin{cases}
    1-f_v(Z_v^{\ell}),
      &P_v^{\ell}=\varnothing,\\[2mm]
    (-1)^{|P_v^{\ell}|+1}
      \Delta_{P_v^{\ell}}f_v(Z_v^{\ell}),
      &P_v^{\ell}\neq\varnothing.
  \end{cases}
  \label{eq:leafmass}
\end{equation}
The first case in \eqref{eq:leafmass} is nonnegative because
$f_v(Z_v^{\ell})\leq1$.  In the second case,
\eqref{eq:accepting-leaf-bound} gives
$1\leq|P_v^{\ell}|\leq k$, so the local $\AD$-$k$ assumption on $f_v$
implies
\[
  (-1)^{|P_v^{\ell}|+1}
  \Delta_{P_v^{\ell}}f_v(Z_v^{\ell})\geq0.
\]
Hence every local factor in \eqref{eq:cylinder-leaf-masses} is nonnegative.
Every product over $v$ is therefore nonnegative, and summing over the accepting
leaves gives
\begin{equation}
  \sum_{\boldsymbol{T}\in\mathcal{C}(A,S;t)}
  \prod_{v\in V}q_v(T_v)\geq0.
  \label{eq:cylinder-mass-nonnegative}
\end{equation}
Combining \eqref{eq:delta-as-cylinder-mass} and
\eqref{eq:cylinder-mass-nonnegative} yields
\begin{equation}
  (-1)^{|A|+1}\Delta_Ap_t(S)\geq0.
  \label{eq:pt-adk-final}
\end{equation}

The target $t$ and the disjoint sets $A,S$ were arbitrary, subject only to
$1\leq|A|\leq k$.  Equation~\eqref{eq:pt-adk-final} is therefore exactly the
definition \eqref{eq:adk} of $p_t$ being $\AD$-$k$.

Finally, the influence spread is
$\sigma(X)=\sum_{t\in V}p_t(X)$.  By linearity of the difference operator,
\[
  (-1)^{|A|+1}\Delta_A\sigma(S)
  =\sum_{t\in V}(-1)^{|A|+1}\Delta_Ap_t(S)\geq0.
\]
Thus $\sigma$ is also $\AD$-$k$, proving the theorem.
\end{proof}

%
%
%
%
%

\section{Conclusion}
\label{sec:conclude}

This paper provides a complete proof of the local-to-global $\AD$-$k$ conjecture originally formulated
	by Chen et al.~\cite{adkConjecture}.
It utilizes several techniques, including M\"obius inversion, reverse reachable sets, and decision-tree partitioning.  A central step
	extends the triggering-set distribution to a signed measure, allowing individual triggering sets to have
	negative weights.
Individual negative weights do not pose a problem because the decision-tree partition guarantees that only aggregate weights of order at most $k$ are used,
	and the local $\AD$-$k$ property guarantees that the total weight in the resulting alternating aggregate is nonnegative.
	
The analysis not only resolves an open conjecture but also offers a fresh perspective on general threshold models with local $\AD$-$k$ properties---
	such models can still be represented through reachability in live-edge graphs generated from the triggering sets, but these triggering sets may have negative weights.
Note that the triggering model is equivalent to the $\AD$-$\infty$ model and its triggering sets form the basis of the reverse influence sampling
	approach that leads to near-linear-time algorithms~\cite{borgs2014rrset,tang2014newrrset,tang2015rrset}.
Therefore, investigating whether the extended triggering sets with negative weights have algorithmic implications is an interesting direction worth pursuing further.

\bibliographystyle{elsarticle-num}
\bibliography{ad_k}	

\begin{thebibliography}{10}
\expandafter\ifx\csname url\endcsname\relax
  \def\url#1{\texttt{#1}}\fi
\expandafter\ifx\csname urlprefix\endcsname\relax\def\urlprefix{URL }\fi
\expandafter\ifx\csname href\endcsname\relax
  \def\href#1#2{#2} \def\path#1{#1}\fi

\bibitem{adkConjecture}
W.~Chen, Q.~Li, X.~Shan, X.~Sun, J.~Zhang, Higher order monotonicity and
  submodularity of influence in social networks: {F}rom local to global,
  Information and Computation 285 (2022) 104864.

\bibitem{kempe2003infmax}
D.~Kempe, J.~Kleinberg, {\'E}.~Tardos, Maximizing the spread of influence
  through a social network, in: KDD, ACM, 2003, pp. 137--146.

\bibitem{domingos2001mining}
P.~Domingos, M.~Richardson, Mining the network value of customers, in: KDD,
  ACM, 2001, pp. 57--66.

\bibitem{richardson2002mining}
M.~Richardson, P.~Domingos, Mining knowledge-sharing sites for viral marketing,
  in: KDD, ACM, 2002, pp. 61--70.

\bibitem{Leskovec2007costeffective}
J.~Leskovec, A.~Krause, C.~Guestrin, C.~Faloutsos, J.~VanBriesen, N.~Glance,
  Cost-effective outbreak detection in networks, in: KDD, ACM, 2007, pp.
  420--429.

\bibitem{Budak2011misinformation}
C.~Budak, D.~Agrawal, A.~El~Abbadi, Limiting the spread of misinformation in
  social networks, in: Proceedings of the 20th International Conference on
  World Wide Web, 2011, pp. 665--674.
\newblock \href {https://doi.org/10.1145/1963405.1963499}
  {\path{doi:10.1145/1963405.1963499}}.

\bibitem{chen2013information}
W.~Chen, L.~V.~S. Lakshmanan, C.~Castillo, Information and influence
  propagation in social networks, Morgan \& Claypool Publishers, 2013.

\bibitem{Li2018survey}
Y.~Li, J.~Fan, Y.~Wang, K.-L. Tan, Influence maximization on social graphs: A
  survey, IEEE Transactions on Knowledge and Data Engineering 30~(10) (2018)
  1852--1872.
\newblock \href {https://doi.org/10.1109/TKDE.2018.2807843}
  {\path{doi:10.1109/TKDE.2018.2807843}}.

\bibitem{ChenWY09efficientinfluence}
W.~Chen, Y.~Wang, S.~Yang, Efficient influence maximization in social networks,
  in: KDD, ACM, 2009, pp. 199--208.

\bibitem{chen2010sharpphard}
W.~Chen, C.~Wang, Y.~Wang, Scalable influence maximization for prevalent viral
  marketing in large-scale social networks, in: KDD, ACM, 2010, pp. 1029--1038.

\bibitem{borgs2014rrset}
C.~Borgs, M.~Brautbar, J.~Chayes, B.~Lucier, Maximizing social influence in
  nearly optimal time, in: SODA, ACM-SIAM, 2014, pp. 946--957.

\bibitem{tang2014newrrset}
Y.~Tang, X.~Xiao, Y.~Shi, Influence maximization: near-optimal time complexity
  meets practical efficiency, in: SIGMOD, ACM, 2014, pp. 946--957.

\bibitem{tang2015rrset}
Y.~Tang, Y.~Shi, X.~Xiao, Influence maximization in near-linear time: A
  martingale approach, in: SIGMOD, ACM, 2015, pp. 1539--1554.

\bibitem{Mossel2010local2global}
E.~Mossel, S.~Roch, Submodularity of influence in social networks: From local
  to global, SIAM J. Comput. 39~(6) (2010) 2176--2188.

\bibitem{Nemhauser1978submodular}
G.~L. Nemhauser, L.~A. Wolsey, M.~L. Fisher, An analysis of approximations for
  maximizing submodular set functions---i, Mathematical Programming 14~(1)
  (1978) 265--294.
\newblock \href {https://doi.org/10.1007/BF01588971}
  {\path{doi:10.1007/BF01588971}}.

\bibitem{Sethpan2005booleanfunction}
S.~Foldes, P.~L. Hammer, Submodularity, supermodularity, and higher-order
  monotonicities of pseudo-boolean functions, Mathematics of Operations
  Research 30~(2) (2005) 453--461.

\bibitem{Grabisch2000mobius}
M.~Grabisch, J.-L. Marichal, M.~Roubens, Equivalent representations of set
  functions, Math. Oper. Res. 25 (2000) 157--178.

\bibitem{Ressel2023compounding}
P.~Ressel, On the compounding of higher order monotonic pseudo-boolean
  functions, Positivity 27~(1) (2023) 3.
\newblock \href {https://doi.org/10.1007/s11117-022-00957-3}
  {\path{doi:10.1007/s11117-022-00957-3}}.

\end{thebibliography}

\end{document}